\synctex=1
\documentclass[aip,jmp,preprint,amssymb,a4paper]{revtex4-1}
\usepackage{bbm}
\usepackage{enumerate}
\usepackage{color}
\usepackage{amsmath}
\usepackage{amssymb}
\usepackage[amsmath, thmmarks]{ntheorem}
\usepackage{graphicx}

%
% theorem env and numbering
%
\theorembodyfont{\normalfont}
\newtheorem{dfn}{Definition}

\theoremsymbol{\ensuremath{\scriptstyle\blacksquare}}

\theoremstyle{nonumberplain}
\theoremheaderfont{\itshape}
\newtheorem{proof}{Proof}

\theoremclass{LaTeX}
\newtheorem{thm}[dfn]{Theorem}
\newtheorem{theorem}[dfn]{Theorem}

\newtheorem{corollary}[dfn]{Corollary}

\def\scaled{\let\onleft=\left\let\onright=\right}
\def\unscale{\let\onleft=\relax\let\onright=\relax}
\unscale
% Spaces
\newcommand{\hilbert}[1]{\ensuremath{\mathfrak{#1}}}

% Maps

\newcommand{\bbone}{\ensuremath{\mathbbm 1}}

% Operators

% Symbols

\newcommand{\conj}{\ensuremath{^*}}

\newcommand{\defeq}{\ensuremath{:=}}
\newcommand{\setdef}{\ensuremath{\;\vert\;}}

\newcommand{\ocmpl}{\ensuremath{^{\mathrm{c}}}}

\def\scaled{\let\onleft=\left\let\onright=\right}
\def\unscale{\let\onleft=\relax\let\onright=\relax}

\newcommand{\set}[1]{\ensuremath{\onleft\{ #1\onright\}}\unscale}

% Norm and absolute value symbols

% Sets

% Relations

\newcommand{\notperp}{\mathbin{\perp\kern-.9em/}}
\newcommand{\compat}{\mathbin{\leftrightarrow}}
\newcommand{\notsim}{\mathbin{\sim\kern-.9em/}}

\newcommand{\powerset}[1]{2^{#1}}

\def\quotient#1#2{%
    \raise1ex\hbox{$#1$}\Big/\lower1ex\hbox{$#2$}%
}

\raggedbottom

\begin{document}

\title{Remarks on the tensor product structure of no-signaling theories}
\author{Tomasz I. Tylec}
\email{tomasz.tylec@ug.edu.pl}
\affiliation{Institute of Theoretical Physics and Astrophysics,
  University of Gdańsk,
  ul. Wita Stwosza 57
  80-308 Gdańsk}
\affiliation{Center for Theoretical Physics,
  Polish Academy of Sciences,
  Aleja Lotnik\'ow 32/46,
  02-668 Warsaw, Poland}
\author{Marek Ku\'s}
\affiliation{Center for Theoretical Physics,
  Polish Academy of Sciences,
  Aleja Lotnik\'ow 32/46,
  02-668 Warsaw, Poland}

\begin{abstract}
  In the quantum logic framework we show that the
  no-signaling box model is a particular type of
  tensor product of the logics of single boxes.
  Such notion of tensor product is too strong
  to apply in the category of logics of
  quantum mechanical systems.
  Consequently, we show that the no-signaling box models
  cannot be considered as
  generalizations of quantum mechanical models.
\end{abstract}

\maketitle

\section{Introduction}

Let us consider the following very simple model: a composite system
consisting of two parties, each of them is a ``black box'', a device
that produce an output value $\alpha$ when it is provided with an input
value $a$. An input can be physically identified with an observable on
the box, while the output is the outcome of that measurement. Any such
device is characterized by the input set and a family of sets, indexed
by input values, of allowed outputs. It is assumed that all these sets
are finite \cite{barrett2005nonlocal,barrett2007information}. A state
of such device defines a probability $P(\alpha|a)$ of getting $\alpha$
given $a$. The system of two such devices produces a pair of output
values $(\alpha,\beta)$ upon pair of input values $(a, b)$. Again, the
state of composite system defines the probability $P(\alpha\beta|ab)$
of getting $(\alpha, \beta)$ given $(a, b)$. Additionally, we assume
certain independence conditions, called a \emph{no-signaling} hold:
\begin{equation*}
  \sum_{\alpha} P(\alpha\beta|ab) = \sum_\alpha P(\alpha\beta|cb),\qquad
  \sum_{\beta} P(\alpha\beta|ab) = \sum_\beta P(\alpha\beta|ac).
\end{equation*}
This is an expression of Einstein's causality principle that forbids
instantaneous interactions (thus also information transfer); in plain
words: ``what happens in one box does not influence the other.''
Intuitively, a no-signaling model is a composite system in which all
components are coupled in minimal physically sensible way.

The above described models were introduced by Popescu and
Rohrlich\cite{popescu1994quantum} (thus their alternative name is
Popescu-Rohrlich or PR-boxes) as an example of a system violating
quantum bound of Bell-type inequalities, while still being compatible
with Einstein's causality principle. Since then, such models found many
applications in quantum information theory, ranging from security of
communication and distributed computing~\cite{
  PhysRevLett.97.120405,Buhrman2006,Short2006},
communication complexity~\cite{PhysRevLett.96.250401,
  Buhrman2010,Linden2007,Brunner2009},
to quantifying randomness~\cite{Pironio2010,Gallego2013}.
It is also a widely used tool in discussions related to
foundations of physics~\cite{Pawlowski:2009aa,
  Masanes2011,Oppenheim2010,Fritz2013,Navascues2009,Barnum2010a,Allcock2009}.

Despite numerous applications, a rigorous mathematical treatment of
no-signaling models is rather scarce. The need for such is two-fold.
Firstly, since there are no physical realizations of no-signaling
models, properties of such systems cannot be verified experimentally
but only on the basis of rigorous mathematical framework. Secondly,
since no-signaling models actually generalize probability beyond
quantum probability, we should expect new qualitative changes, in
analogy to passing from classical to quantum probability. Only a
systematic study of mathematical structure can reveal such new
non-intuitive properties.

As far as the authors are concerned, up to date there are only two
rigorous mathematical treatments of no-signaling theories. One is based
on the convex set approach and is usually called a \emph{Generalized}
or \emph{Generic Probability Theory (GPT)}
(see~Ref.~\onlinecite{Barnum2006} and references therein). The second
approach is due to us~\cite{tylec2015,tylec2015-2} and describes
no-signaling models in the framework of quantum logics (in the sense
of~Ref.~\onlinecite{ptak1991orthomodular}). Ours approach is less
general than GPT, but covers standard definition of no-signaling models
and allows to study more fine-grained structures of the theory.

One of the greatest arguments against quantum logic framework is the
problem of defining suitable tensor product. From that point of view
the convex set approach seems to be better, as the notion of tensor
products (since they are not unique; see
e.g.~Ref.~\onlinecite{namioka1969tensor} or
Ref.~\onlinecite{Jarchow1981}) is well studied for them. It can be
shown that the state space of no-signaling model is a maximal tensor
product (as a tensor product of ordered linear spaces) of state spaces
of its components, while the subset of classically correlated states is
a minimal tensor product. In this paper we show how the quantum logics
of components of no-signaling model combine together to form a quantum
logic of the whole system. Due to the simple structure of the
components, there exists a suitable notion of tensor product of quantum
logics \cite{pulmannova1983} that fits this scheme.

\section{Quantum structures}
\label{sec:quantum-structures}

Let us recall the most important definitions and facts
that will be used in the sequel.

\begin{dfn}
  \label{thm:def-qlogic}
  A \emph{quantum logic} is a partialy ordered set $L$
  with a map $\ocmpl\colon L\to L$ such that
  \begin{enumerate}
  \item[L1] there exists the greatest (denoted by $\bbone$)
    and the least (denoted by $0$) element in $L$,
  \item[L2] map $p \mapsto p\ocmpl$ is order reversing, i.e.\
    $p \le q$ implies that $q\ocmpl \le p\ocmpl$,
  \item[L3] map $p \mapsto p\ocmpl$ is idempotent, i.e.\
    $(p\ocmpl)\ocmpl = p$,
  \item[L4] for a countable family $\set{p_i}$, s.t.\ $p_i \le p_j\ocmpl$
    for $i\neq j$, the supremum $\bigvee \set{p_i}$ exists,
  \item[L5] if $p\le q$ then $q = p \vee(q\wedge p\ocmpl)$
    (orthomodular law),
  \end{enumerate}
  where
  $p\vee q$ is the least upper bound
  and $p \wedge q$ the greatest lower bound of $p$ and $q$.
\end{dfn}

Two elements $p, q$ of quantum logic $L$ are called \emph{disjoint}
whenever $p \le q\ocmpl$. An element $p$ is said to \emph{cover} $q$
whenever $q \le r \le p$ implies $r=q$ or $r=p$. Elements covering $0$
are called \emph{atoms} and $L$ is called \emph{atomistic} whenever any
element $q\in L$ is a supremum of all atoms less than $q$. In a typical
way we define a \emph{sublogic} $K$ of a quantum logic $L$ as a subset
$K\subset L$ closed under orthocompletion and countable sums of
disjoint elements.

\begin{dfn}
  A \emph{state} $\rho$ on a quantum logic $L$ is a map $\rho\colon L
  \to [0, 1]$, s.t.\
  \begin{enumerate}
  \item[S1] $\rho(\bbone) = 1$,
  \item[S2] for a countable family $\set{p_i}$, s.t.\ $p_i \le p_j\ocmpl$
    $\rho(\bigvee\set{p_i}) = \sum_i p_i$.
  \end{enumerate}
  We will be denoted by $\mathcal S(\mathcal L)$ the set of all states
  on a quantum logic $L$
\end{dfn}

\begin{dfn}
    Elements $p, q\in L$ of a quantum logic $L$ are \emph{compatible},
    what we denote by $p\compat q$, whenever there exist pairwise
    disjoint elements $p_1, q_1, r$ such that $p = p_1 \vee r, q = q_1
    \vee r$.

    More generally, a subset $A\subset L$ is said to be
    \emph{compatible} whenever for any finite subset $\set{p_1, \dots,
    p_n}\subset A$ there exist finite subset $G\subset L$, such that
    (i) elements of $G$ are mutually disjoint, (ii) any $p_i$ is
    supremum of some subset of $G$.
\end{dfn}

It is easier to think about compatibility in terms of the following
property:

\begin{theorem}[Ref.~\onlinecite{ptak1991orthomodular}, Thm. 1.3.23]
    Let $A\subset\mathcal L$ be a compatible subset of quantum logic.
    Then there exists a Boolean sublogic $\mathcal K\subset \mathcal L$,
    s.t. $A\subset \mathcal K$.\label{thm:compat}
\end{theorem}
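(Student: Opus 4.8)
The plan is to realise $\mathcal K$ as a \emph{maximal} compatible subset of $\mathcal L$ containing $A$ and then to show that maximality forces such a set to be a Boolean sublogic. First, by Zorn's lemma choose a compatible set $M$ with $A\subset M$ maximal under inclusion; this is legitimate because the union of a chain of compatible sets is again compatible (any finite subset of the union already lies in one member of the chain). Next I would record the easy closure properties of $M$. Clearly $0,\bbone\in M$. If $p\in M$ and $\set{p_1,\dots,p_n}\subset M$ is finite, pick a finite family $G=\set{g_1,\dots,g_k}$ of mutually disjoint elements with $\bigvee G=\bbone$ (adjoin $(\bigvee G)\ocmpl$ if necessary) such that each $p_i$ is the join of a subset $S_i\subset G$; then $p_i\ocmpl=\bigvee\set{g_j\setdef j\notin S_i}$ is again a join of a subset of $G$, so $M\cup\set{p\ocmpl}$ is compatible and maximality gives $p\ocmpl\in M$. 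The same device — the observation that $S\mapsto\bigvee\set{g_j\setdef j\in S}$ maps $\powerset{\set{1,\dots,k}}$ onto a finite Boolean subalgebra of $\mathcal L$, on which the lattice operations of $\mathcal L$ restrict correctly — shows that for $p,q\in M$ the join $p\vee q$ and meet $p\wedge q$ (which exist because $p\compat q$) lie in $M$. Thus $M$ is closed under orthocomplementation and finite lattice operations.

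The one non-routine point is that $M$ is also closed under countable joins of mutually disjoint families, i.e.\ that $M$ is a genuine sublogic. Let $\set{q_i}_{i\in\naturals}\subset M$ be mutually disjoint with $q=\bigvee_i q_i$, and let $\set{f_1,\dots,f_m}\subset M$ be finite; I must refine $\set{q,f_1,\dots,f_m}$ by a finite orthogonal family. Choose a finite orthopartition $H=\set{h_1,\dots,h_r}$ of $\bbone$ with each $f_j$ a join of a subset of $H$; by the previous paragraph each $h_l$ is obtained from $f_1,\dots,f_m$ by finitely many lattice operations and orthocomplementations, hence $h_l\in M$. Since $q_i\compat h_l$ for all $i,l$ (both lie in the compatible set $M$) and the commutant $C(h_l)=\set{x\in\mathcal L\setdef x\compat h_l}$ is a sublogic of $\mathcal L$ (Ref.~\onlinecite{ptak1991orthomodular}, Ch.~1), we obtain $q=\bigvee_i q_i\in C(h_l)$, that is $q\compat h_l$ for every $l$. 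Because $q$ is then compatible with every element of the finite Boolean algebra generated by $H$, the family $\set{q\wedge h_l\setdef l}\cup\set{q\ocmpl\wedge h_l\setdef l}$ consists of mutually disjoint elements with join $\bbone$, $q$ is the join of its first half, and each $f_j$ is the join of a subset of it. Hence $M\cup\set{q}$ is compatible and maximality yields $q\in M$. So $M$ is a sublogic.

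Finally I would check that $M$ is Boolean. It inherits the order, orthocomplementation and countable orthogonal joins of $\mathcal L$ and is closed under finite meets and joins, so it is a $\sigma$-orthocomplete orthomodular lattice. Moreover every finite subset $\set{p_1,\dots,p_n}\subset M$ is \emph{jointly} compatible — that is exactly what membership in the compatible set $M$ says — and the witnessing finite orthopartition places any three chosen elements simultaneously inside a finite Boolean subalgebra of $\mathcal L$, so the distributive law holds for them. A complemented distributive lattice is a Boolean algebra, and $\sigma$-completeness makes it a Boolean sub-$\sigma$-algebra; thus $\mathcal K\defeq M$ is a Boolean sublogic with $A\subset\mathcal K$.

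The step I expect to be the main obstacle is the closure under countable orthogonal suprema: the rest reduces to the elementary fact that a finite orthogonal family generates a finite Boolean subalgebra together with the maximality of $M$, whereas passing to the infinite join genuinely relies on the commutant of a single element being closed under countable orthogonal joins.
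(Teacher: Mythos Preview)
The paper does not supply a proof of this statement: it is quoted verbatim as Theorem~1.3.23 of Ref.~\onlinecite{ptak1991orthomodular} and used as background, so there is no in-paper argument to compare yours against. Your strategy---extend $A$ to a maximal compatible set $M$ via Zorn's lemma, then show that maximality forces closure under complements, finite lattice operations, and countable orthogonal joins, and finally read off distributivity from the defining joint-refinement property---is the standard one and is essentially how Pt\'ak and Pulmannov\'a themselves argue in the cited book.

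The argument is sound. One presentational wrinkle: in the orthocomplementation step you write ``pick $G$ such that each $p_i$ is the join of a subset of $G$'' while treating $p$ as a separate element; you must include $p$ itself among the elements refined by $G$ (otherwise nothing guarantees $p\ocmpl$ is a join of $G$-blocks), so simply apply the compatibility hypothesis to the finite set $\set{p,p_1,\dots,p_n}\subset M$. The only genuinely non-elementary ingredient you invoke---that $q_i\compat h$ for a countable orthogonal family $\set{q_i}$ forces $\bigvee_i q_i\compat h$, i.e.\ that the commutant of a single element is a sublogic---is indeed a basic lemma in Chapter~1 of Ref.~\onlinecite{ptak1991orthomodular}, so your citation is appropriate and the proof goes through.
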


For orthomodular lattices, compatibility of a set $A$ is equivalent to
pairwise compatibility of elements. For general quantum logic it is no
longer true. However, if quantum logic is \emph{regular}, i.e.\ for any
triple of mutually compatible elements $\set{a, b, c}$, $a \compat
b\vee c$, a set $A$ is compatible if and only if all elements are
pairwise compatible (see~Ref.~\onlinecite{ptak1991orthomodular}, Def.
1.3.26 and Prop. 1.3.27.)

\begin{dfn}[see~Ref.~\onlinecite{ptak1991orthomodular}, Sec. 1.1]
  Let $\Delta$ be a family of subsets of some set $\Omega$
  with partial order relation given by set inclusion
  and $A\ocmpl = \Omega\setminus A$ satisfying:
  \begin{enumerate}
  \item[C1] $\emptyset\in\Delta$,
  \item[C2] $A\in \Delta$ implies $\Omega\setminus A \in \Delta$,
  \item[C3] for any countable family $\set{A_i}\subset \Delta$
    of mutually disjoint sets
    $\bigcup \set{A_i} \in \Delta$.
  \end{enumerate}
  Then $(\Omega, \Delta)$ is called a concrete (quantum) logic.
\end{dfn}

\begin{dfn}[see Def.\ 44 in Ref.~\onlinecite{pulmanova2007}]
    Let $\mathcal L$ be a quantum logic.
    The set of states $\mathcal S$ is said to
    be \emph{rich} whenever:
    \begin{equation*}
        \set{\mu \in \mathcal S\setdef \mu(a) = 1} \subset
        \set{\mu \in \mathcal S\setdef \mu(b) = 1} \qquad\implies\qquad
        a \le b.
    \end{equation*}
    We say that $\mathcal L$ is \emph{rich} whenever it has a rich
    subset of states. \label{thm:rich}
\end{dfn}

\begin{thm}[see Thm.\ 48 in Ref.~\onlinecite{pulmanova2007}]
    \label{thm:set-repr} A quantum logic $L$ is set-representable,
    i.e.\ there exists order preserving isomorphism between $L$ and
    some concrete logic $(\Omega, \Delta)$, if and only if $L$ has a
    rich set of two-valued states (i.e.\ states with a property that
    $\forall q\in L, \sigma(q) = 1$ or $\sigma(q)=0$).
\end{thm}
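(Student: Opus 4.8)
The plan is to set up the standard ``evaluation'' correspondence between the points of the representing set and the two-valued states, and to prove the two implications of the biconditional separately.

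Suppose first that $L$ is set-representable, so there is an order isomorphism $\phi\colon L\to(\Omega,\Delta)$ carrying $\ocmpl$ to set-complement. For each $\omega\in\Omega$ I would define $\sigma_\omega\colon L\to[0,1]$ by $\sigma_\omega(p)=1$ if $\omega\in\phi(p)$ and $\sigma_\omega(p)=0$ otherwise. To see that $\sigma_\omega$ is a state: since $\emptyset\in\Delta$ (C1) we have $\Omega=\Omega\setminus\emptyset\in\Delta$ (C2), and $\Omega$ is the top of $\Delta$, so $\phi(\bbone)=\Omega$ and S1 holds; if $\set{p_i}$ is a countable family with $p_i\le p_j\ocmpl$ for $i\ne j$, then $\bigvee_i p_i$ exists by L4, the sets $\phi(p_i)\subset\Omega\setminus\phi(p_j)$ are pairwise disjoint, so $\bigcup_i\phi(p_i)\in\Delta$ by C3, and this union, being the smallest member of $\Delta$ above every $\phi(p_i)$, equals $\phi(\bigvee_i p_i)$; since the union is disjoint, the characteristic function of $\bigcup_i\phi(p_i)$ at $\omega$ is $\sum_i\sigma_\omega(p_i)$, which is S2. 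Each $\sigma_\omega$ is two-valued by construction, and the family $\set{\sigma_\omega\setdef\omega\in\Omega}$ is rich: if $\sigma_\omega(a)=1$ forces $\sigma_\omega(b)=1$ for every $\omega$, then $\phi(a)\subset\phi(b)$, hence $a\le b$ because $\phi$ reflects the order.

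For the converse, given a rich set $S_0$ of two-valued states I would put $\Omega=S_0$, $\phi(p)=\set{\sigma\in S_0\setdef\sigma(p)=1}$, and $\Delta=\set{\phi(p)\setdef p\in L}$. The routine facts are: states are monotone (by L5, $p\le q$ gives $q=p\vee(q\wedge p\ocmpl)$ with $p$ disjoint from $q\wedge p\ocmpl$, so $\rho(q)\ge\rho(p)$), hence $p\le q$ implies $\phi(p)\subset\phi(q)$; from $\sigma(p)+\sigma(p\ocmpl)=\sigma(\bbone)=1$ one gets $\phi(p\ocmpl)=\Omega\setminus\phi(p)$, and in particular $\phi(0)=\emptyset$, so C1 and C2 hold for $\Delta$ and $\phi$ respects $\ocmpl$; richness supplies the reverse implication $\phi(p)\subset\phi(q)\Rightarrow p\le q$, so $\phi$ is an isomorphism of orthocomplemented posets onto its image $\Delta$.

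The one step that is not mere bookkeeping is C3 for $\Delta$, and that is where I expect the real work to lie. Let $\set{A_i}\subset\Delta$ be countable and mutually disjoint, say $A_i=\phi(p_i)$. First I would use richness to promote set-disjointness to orthogonality in $L$: $A_i\cap A_j=\emptyset$ means $\phi(p_i)\subset\Omega\setminus\phi(p_j)=\phi(p_j\ocmpl)$, hence $p_i\le p_j\ocmpl$. Then L4 yields $\bigvee_i p_i\in L$, and I claim $\phi(\bigvee_i p_i)=\bigcup_i A_i$. The inclusion ``$\supset$'' is monotonicity; for ``$\subset$'' take $\sigma$ with $\sigma(\bigvee_i p_i)=1$, so $\sum_i\sigma(p_i)=1$ by S2 — and here two-valuedness is essential: each summand lies in $\set{0,1}$ and the total is $1$, so exactly one $\sigma(p_i)$ equals $1$, i.e.\ $\sigma\in A_i$ for that $i$, which simultaneously exhibits $\bigcup_i A_i$ as a disjoint union, as C3 requires. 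With C1--C3 in hand, $\phi$ realizes $L$ as the concrete logic $(\Omega,\Delta)$, completing the proof.
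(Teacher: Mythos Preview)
The paper does not supply a proof of this theorem at all: it is stated with the attribution ``see Thm.~48 in Ref.~\onlinecite{pulmanova2007}'' and is used as an imported result, so there is nothing in the paper to compare your argument against. Your proof is correct and is precisely the standard argument one finds in the literature: in one direction the point-evaluations $\sigma_\omega$ give the rich two-valued family, and in the other the map $p\mapsto\set{\sigma\in S_0\setdef\sigma(p)=1}$ furnishes the concrete representation, with richness providing injectivity and order-reflection, and two-valuedness making the $\sigma$-additivity computation for C3 go through. The only cosmetic remark is that you might state explicitly that $\phi$ is injective (which you have implicitly, since order-reflection plus order-preservation on an antisymmetric poset forces it), but nothing of substance is missing.
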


\begin{dfn}[see Ref.~\onlinecite{varadarajan2007geometry} p. 53]
  Let $S\subset\mathcal S$ be subset of the set of states $\mathcal S$ of a
  quantum logic $L$. We say that $\mu\in\mathcal S$ is \emph{a superposition} of
  states in $S$ whenever
  \begin{equation*}
    \nu(a) = 0,\forall \nu\in S
    \qquad \implies \qquad
    \mu(a) = 0.
  \end{equation*}
  Let us denote by $\overline{S} = \set{\mu \in \mathcal S\setdef \mu \text{ is
      superposition of states in } S}$.
\end{dfn}

From the physical point of view, it is essential to have a tool to
describe composite systems. There are some arguments in the direction
that it should always be a kind of (categorical) tensor
product\cite{aerts1978physical}. Unfortunately, this notion presents
some difficulties in the theory of quantum logics when one tries to
define it universally (cf. Ref.~\onlinecite{dvurecenskij2000new},
chapter 4, and references therein). Let us recall only these notions
that we will discuss in the sequel. The following definition is a
direct generalization of Def.~3 of Ref.~\onlinecite{matolcsi1975tensor}
to a category of regular quantum logics.

\begin{dfn}
    Let $L_1, L_2$ be regular quantum logics.
    The \emph{free orthodistributive product} of $L_1, L_2$
    is a triple $(L, u_1, u_2)$,
    where $L$ is a quantum logic and
    \begin{enumerate}[(i)]
        \item $u_i\colon L_i\to L$ are monomorphisms,
        \item $u_1(L_1) \cup u_2(L_2)$ generates $L$,
        \item $u_1(a) \wedge u_2(b) = 0$ iff $a = 0$ or $b = 0$,
        \item $u_1(a) \compat u_2(b)$ for any $a\in L_1, b \in L_2$.
    \end{enumerate}
\end{dfn}

Pulmannov\'a\cite{pulmannova1985tensor} has shown that the existence of
free orthodistributive product in the category of atomistic
$\sigma$-lattices is quite an exceptional case:
\begin{theorem}[Ref.~\onlinecite{pulmannova1985tensor}, Thm.~2]
    Let $L, L_1, L_2$ be complete atomistic $\sigma$-orthomodular lattices,
    and let $(L, u_1, u_2)$ be free orthodistributive product of $L_1, L_2$.
    Let $A, A_1, A_2$ be atoms of $L, L_1, L_2$ respectively.
    If
    \begin{equation}
        A = \set{u_1(a)\wedge u_2(b)\setdef a \in A_1, b \in A_2}
        \label{eq:atoms-gen}
    \end{equation}
    then at least one of $L_1, L_2$ is a Boolean algebra.
    \label{thm:free-ortho-trivial}
\end{theorem}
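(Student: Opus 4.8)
The plan is to argue by contradiction: assume that neither $L_1$ nor $L_2$ is a Boolean algebra and obtain a contradiction with \eqref{eq:atoms-gen}. I would first record the elementary remark that a complete atomistic $\sigma$-orthomodular lattice is Boolean if and only if its atoms are pairwise orthogonal: if they are, the join of all atoms is an orthogonal join equal to $\bbone$, and sending an element to the set of atoms below it is an orthocomplemented isomorphism onto a power set; the converse is immediate. Since two distinct atoms $a_1\neq a_2$ are compatible exactly when $a_1\le a_2\ocmpl$, i.e.\ exactly when $a_1\perp a_2$, our assumption provides distinct non-orthogonal atoms $a_1,a_2$ of $L_1$ and, likewise, distinct non-orthogonal atoms $b_1,b_2$ of $L_2$. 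Throughout I use that the monomorphisms $u_i$ preserve orthocomplements, finite meets and finite joins.

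The core step is to identify the interval $[0,u_1(a)]$ for every atom $a$ of $L_1$. Fix an atom $b$ of $L_2$. Since $b\vee b\ocmpl=\bbone_{L_2}$ is an orthogonal join, $u_2$ preserves it, so $u_1(a)=u_1(a)\wedge\bigl(u_2(b)\vee u_2(b\ocmpl)\bigr)$. By (iv) the elements $u_1(a),u_2(b),u_2(b\ocmpl)$ are pairwise compatible, hence --- $L$ being an orthomodular lattice, and therefore regular --- they lie in a common Boolean sublogic by Theorem~\ref{thm:compat}, where distributing yields
\begin{equation*}
  u_1(a)=\bigl(u_1(a)\wedge u_2(b)\bigr)\vee\bigl(u_1(a)\wedge u_2(b\ocmpl)\bigr).
\end{equation*}
By (iii) both summands are non-zero, by \eqref{eq:atoms-gen} they are atoms, and they are orthogonal, since one lies under $u_2(b)$ and the other under $u_2(b)\ocmpl$. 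Moreover any atom of $L$ below $u_1(a)$ is a rectangle $u_1(a')\wedge u_2(b')$ with $u_1(a')\wedge u_1(a)=u_1(a'\wedge a)$, and as distinct atoms of $L_1$ meet to $0$ this forces $a'=a$; thus every atom $C\le u_1(a)$ equals $u_1(a)\wedge u_2(b_C)$ for some atom $b_C$ of $L_2$, and applying the displayed identity to $b_C$ together with the orthomodular identification $[C,u_1(a)]\cong[0,u_1(a)\wedge C\ocmpl]$ shows $u_1(a)\wedge C\ocmpl$ is an atom, i.e.\ $u_1(a)$ covers $C$. Hence $[0,u_1(a)]$ is an atomistic orthomodular lattice of length two, a horizontal sum of copies of $\powerset2$ (the four-element Boolean algebra). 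Now take \emph{orthogonal} atoms $b,b'$ of $L_2$ (they exist because $L_2\neq\set{0,\bbone}$): then $u_1(a)\wedge u_2(b)$ and $u_1(a)\wedge u_2(b')$ are distinct atoms of that length-two interval, so their join is its top; hence $u_1(a)\le u_2(b)\vee u_2(b')=u_2(b\vee b')$, so $u_1(a)\wedge u_2\bigl((b\vee b')\ocmpl\bigr)=0$, and by (iii), $b\vee b'=\bbone_{L_2}$. So every orthogonal pair of atoms of $L_2$ joins to $\bbone_{L_2}$; a short case analysis --- if $L_2$ had a chain of length three, choose an atom $b$ and, inside the non-atom $b\ocmpl$, an atom $b'<b\ocmpl$, so that $b\perp b'$ yet $b\vee b'$ merely covers $b$ --- then forces $L_2$ itself to have length two, and, being non-Boolean, to be a horizontal sum of at least two copies of $\powerset2$; by symmetry so is $L_1$.

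It remains to rule out the case where $L_1$ and $L_2$ are horizontal sums of $n_1\ge 2$, resp.\ $n_2\ge 2$, copies of $\powerset2$. Write the atoms of $L_1$ as complementary pairs $p_i,p_i\ocmpl$ and those of $L_2$ as $r_j,r_j\ocmpl$. Preservation of meets and joins makes the picture rigid: any two distinct atoms of $L_1$ (or of $L_2$) meet to $0$ and join to the top, so all rectangles $u_1(a)\wedge u_2(b)$ are distinct atoms, every $[0,u_1(p_i)]$ and $[0,u_2(r_j)]$ is again a horizontal sum of copies of $\powerset2$, and $\bbone_L=u_1(p_1)\vee u_1(p_1\ocmpl)$ is an orthogonal join of two length-two elements, so $L$ has length four, with the four mutually orthogonal rectangle-atoms built from $p_1,p_1\ocmpl$ and $r_1,r_1\ocmpl$ spanning a copy of $\powerset4$ whose atoms sum to $\bbone_L$. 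The strategy is now to show this configuration impossible --- morally, the abstract free product would be forced to contain ``entangled'' atoms not of rectangle form: one first checks that the length-two element $\bigl(u_1(p_1)\wedge u_2(r_1)\bigr)\vee\bigl(u_1(p_1\ocmpl)\wedge u_2(r_1\ocmpl)\bigr)$ is none of the $u_1(x)$ nor of the $u_2(y)$, and then plays the incompatibility of $u_1(p_1)$ with $u_1(p_2)$ against the rigid length-two structure of the intervals $[0,u_1(p_i)]$ and $[0,u_2(r_j)]$. This last step is where the real work lies: one must turn ``every atom is a rectangle'' into a genuine clash rather than a harmless relabelling of coinciding rectangles, uniformly in $n_1$ and $n_2$. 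Everything before it reduces, via (iv) and Theorem~\ref{thm:compat}, to distributive computations inside Boolean sublogics; closing the argument is precisely the combinatorial content of Pulmannov\'a's theorem.
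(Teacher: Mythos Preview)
The paper does not prove this theorem; it is quoted from Pulmannov\'a's 1985 paper without proof, so there is no argument in the present paper to compare your attempt against.

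On its own merits your attempt is incomplete, and you acknowledge as much: after reducing both factors to horizontal sums of at least two copies of $\powerset2$, you write that ``closing the argument is precisely the combinatorial content of Pulmannov\'a's theorem.'' That is circular --- you are invoking the very result being proved --- and the reduction you perform does not obviously trivialise what remains; the case where both factors are length-two horizontal sums is in fact the heart of the matter, not a corollary.

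There is also a genuine error earlier. You claim that $[0,u_1(a)]$ has length two because ``$u_1(a)\wedge C\ocmpl$ is an atom.'' But the displayed identity applied to $b_C$ only yields $u_1(a)\wedge C\ocmpl=u_1(a)\wedge u_2(b_C\ocmpl)$, and $b_C\ocmpl$ is not an atom of $L_2$ in general, so hypothesis~\eqref{eq:atoms-gen} does not force this element to be an atom of $L$. Concretely, if $L_2$ contains three pairwise orthogonal atoms $b_1,b_2,b_3$, then the elements $u_1(a)\wedge u_2(b_i)$ are three pairwise orthogonal non-zero atoms below $u_1(a)$ (non-zero by~(iii), orthogonal since $u_2(b_i)\le u_2(b_j)\ocmpl$ for $i\neq j$), so $[0,u_1(a)]$ has length at least three. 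Your subsequent deduction that every orthogonal pair of atoms of $L_2$ joins to $\bbone_{L_2}$, and hence that $L_2$ itself has length two, therefore rests on an unjustified step.
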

Let us remark that the property~\eqref{eq:atoms-gen} is not satisfied
for a pair of lattices of projections on a Hilbert space, thus this
theorem is not applicable to the tensor product of Hilbert spaces
(which obviously is a free orthodistributive product of lattices of
projections).

In order to define \emph{a tensor product} of quantum logics (which is
required to be a free orthodistributive product from the
category-theoretical standpoint) we want to specify not only how the
set of propositions behave but also how sets of states combine
together. \hbox{Pulmannov\'a} proposed the following two definitions
\cite{pulmannova1983,pulmannova1985tensor}:

\begin{dfn}
    Let $L, K$ be quantum logics with $\mathcal S, \mathcal R$ being a
    state spaces of $L$ and $K$ respectively. A quantum logic $T$ with
    a state space $\mathcal U$ will be called a \emph{strong tensor
    product} of $L$ and $K$ whenever there are mappings $\alpha\colon
    L\times K \to T$, $\beta\colon \mathcal S, \mathcal R\to U$ such
    that:
    \begin{enumerate}[(i)]
        \item $\beta(\mu, \nu)(\alpha(a, b)) = \mu(a)\nu(b)$,
        \item set $\beta(\mathcal S, \mathcal R)$ is rich for $T$,
        \item $T$ is generated by $\alpha(L, K)$.
    \end{enumerate}
    if instead of (ii) following two weaker conditions are satisfied
    \begin{enumerate}
        \item[(ii')] $\set{\chi\in\mathcal U\setdef \chi(c)=1} =
            \overline{\set{\beta(\mu, \nu)\setdef \beta(\mu, \nu)(c) = 1}}$,
            for all $c$ of the form:
            \begin{enumerate}[(a)]
                \item $c = \wedge_i \alpha(a_i, b_i)$, for all $a_i\in L, b_i\in K$ or
                \item $c = \alpha(a, \bbone)$ for all $a \in L$ or
                \item $c = \alpha(\bbone, b)$ for all $b \in L$,
            \end{enumerate}
        \item[(ii'')] $\overline{\beta(\mathcal S, \mathcal R)} = \mathcal U$
    \end{enumerate}
    then we say that $T$ is a \emph{weak tensor product} of $L$ and $K$.
\end{dfn}

From the previously quoted Thm.~\ref{thm:free-ortho-trivial} it
follows\cite{pulmannova1985tensor} that in the category of atomistic
$\sigma$-orthomodular lattices the strong tensor product exists only if
at least one of the components is a Boolean algebra. In particular, it
does not exist for lattices of projections on separable Hilbert spaces.
On the other hand, if $L_1, L_2$ are lattices of projections on Hilbert
spaces $\hilbert H_1, \hilbert H_2$, then the weak tensor product
exists and coincides with the lattice of projections on $\hilbert
H_1\otimes \hilbert H_2$ (see Ref.~\onlinecite{pulmannova1985tensor},
Sec. IV; actually, there is second, inequivalent possibility - the
lattice of projections on $\hilbert H_1\conj\otimes \hilbert H_2$, but
this is not relevant for our considerations here).

% add definition of tensor product in the category of D-posets/effect algebras.
% add definition of 0-1 pasting

\begin{dfn}[cf. Ref. \onlinecite{ptak1991orthomodular}]
    Let $\set{L_i}_i$ be a countable family of quantum logics.
    A \emph{0-1-pasting} of $\set{L_i}$ is quantum logics $L$
    defined as quotient of disjoint union of $L_i$'s:
    \begin{equation*}
        L = \coprod_i L_i / \sim,
    \end{equation*}
    where $a \sim b$ iff $a, b$ are both the least elements
    or the greatest elements in any of $L_i$'s.
    In other words, $L$ is the disjoint union of logics $L_i$
    glued togheter at $0$ and $\bbone$.
\end{dfn}

\section{Logic of non-signalling boxes as a tensor product}
\label{sec:tensor-product}

Let us briefly recall the structure of the logic of arbitrary two box
system; cf. Ref.~\onlinecite{tylec2015-2,tylec2015} for detailed
discussion and proof that the construction below indeed yields the
logic of arbitrary no-signaling box model. To fix the notation, let the
first box accept $N$ distinct inputs, labelled by $1, \dots, N$. For
each input $a=1,\dots, N$ let $\mathcal U_a$ denote the set of possible
outcomes (also of finite cardinality). Denote by $\mathcal U =
(\mathcal U_1, \dots, \mathcal U_N)$. Similarly, let the second box
accept $M$ distinct inputs, again labelled by $1,\dots, M$ and set of
outcomes for intput $b$ will be denoted by $\mathcal V_b$. Let
$\mathcal V = (\mathcal V_1, \dots, \mathcal V_M)$. We call such two
box system a $(\mathcal U, \mathcal V)$-box world.

Denote by $\Gamma_1 = \set{(x_1, \dots, x_N)\setdef x_a\in\mathcal
U_a}$ and $\Gamma_2 = \set{(y_1, \dots, y_M)\setdef y_b\in\mathcal
\mathcal V_b}$ the classical phase space that can be associated with
the first and the second box, respectively. The composite (classical)
system will be described a classical phase space $\Gamma =
\Gamma_1\times \Gamma_2$. An experimental question ``does pair of
inputs $(a, b)$ results in a pair of outputs $(\alpha, \beta)$'' can be
associated with the following subset of the phase space $\Gamma$:
\begin{equation*}
    [a \alpha, b \beta] \defeq \set{(x, y)\in \Gamma\setdef
        x_a = \alpha, y_b = \beta}
\end{equation*}
The logic $L$ of $(\mathcal U, \mathcal V)$-box world is the quantum
logic generated in the Boolean algebra $\powerset\Gamma$ by all
questions of the above form (see~Ref.~\onlinecite{tylec2015-2,
tylec2015} for details):
\begin{equation*}
     A = \set{ [a \alpha, b \beta] \setdef
        (x, y)\in \Gamma\setdef x_a = \alpha, y_b = \beta}
\end{equation*}
Of course, $L$ does not have to be (and is not) a Boolean algebra
anymore.

In a similar way we can assign a logic $L_{\mathcal U}$ to the first
box. It is the quantum logic $L_{\mathcal U}$ generated in
$\powerset{\Gamma_1}$ by elements $[a \alpha] \defeq \set{x \in
\Gamma_1\setdef x_a = \alpha}$. Observe that $L_{\mathcal U}$ is a
0-1-pasting of the family of Boolean logics $\set{\powerset{\mathcal
U_a}}_{a=1,\dots, N}$. Intuitievly speaking, this means that we do not
impose any relations between outputs for different inputs. In the same
manner we define logic $L_{\mathcal V}$ of the second box.

\begin{theorem}
  The quantum logic $L$ of $(\mathcal U, \mathcal V)$-box system is an
  orthodistributive product of logics $L_{\mathcal U}$ and $L_{\mathcal
  V}$.
\end{theorem}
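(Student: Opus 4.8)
The plan is to verify the four defining properties of an orthodistributive product for the triple $(L, u_1, u_2)$, where $u_1\colon L_{\mathcal U}\to L$ is the map induced by $[a\alpha]\mapsto [a\alpha, b\beta]$ summed over $\beta$ — concretely, $u_1([a\alpha]) = \bigvee_{\beta\in\mathcal V_b}[a\alpha, b\beta]$ for any fixed $b$ (one first checks this is independent of $b$, which is exactly the content of a no-signaling-type marginal relation already built into $L$), and similarly $u_2$ for the second box. Since $L$, $L_{\mathcal U}$, $L_{\mathcal V}$ are all concrete logics sitting inside the respective power sets, the cleanest route is to do everything set-theoretically: $u_1([a\alpha])$ corresponds to the cylinder set $\set{(x,y)\in\Gamma\setdef x_a = \alpha}$, and $u_2([b\beta])$ to $\set{(x,y)\in\Gamma\setdef y_b = \beta}$. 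I would first establish that these cylinder assignments extend to well-defined logic morphisms $u_i$ — monomorphisms because the projections $\Gamma\to\Gamma_1$ and $\Gamma\to\Gamma_2$ are surjective, so distinct elements of $L_{\mathcal U}$ pull back to distinct elements of $L$; this gives property (i).

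Property (ii), that $u_1(L_{\mathcal U})\cup u_2(L_{\mathcal V})$ generates $L$, follows because $[a\alpha, b\beta]$ is exactly the set-intersection $\set{x_a=\alpha}\cap\set{y_b=\beta} = u_1([a\alpha])\wedge u_2([b\beta])$ (intersection of cylinders is again in $L$ and computes the meet), and these generate $L$ by definition. The same identity handles property (iii): $u_1(a)\wedge u_2(b) = 0$ means the corresponding cylinder sets in $\Gamma = \Gamma_1\times\Gamma_2$ have empty intersection, and since both factors are nonempty whenever $a\neq 0$ and $b\neq 0$, a product of nonempty cylinders is nonempty — so the meet is $0$ iff one of $a, b$ is $0$. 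For property (iv), compatibility of $u_1(a)$ and $u_2(b)$: here I would invoke that $L$ is concrete (a sublogic of a Boolean algebra $\powerset\Gamma$), and in a Boolean algebra any two elements are compatible; since compatibility is witnessed inside $L$ by the disjoint decomposition $u_1(a) = (u_1(a)\wedge u_2(b))\vee(u_1(a)\wedge u_2(b)\ocmpl)$ and symmetrically for $u_2(b)$, with common part $r = u_1(a)\wedge u_2(b)$ — and all these meets lie in $L$ because they are intersections of cylinder sets — compatibility holds in $L$ itself, not merely in the ambient power set. One should double-check regularity of $L_{\mathcal U}$, $L_{\mathcal V}$, $L$ so that the definition of orthodistributive product applies; a $0$-$1$-pasting of Boolean algebras is concrete and its compatibility reduces to pairwise compatibility, and concrete logics are regular, so this is routine.

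The main obstacle I anticipate is property (iv) done carefully: one must make sure the witnessing elements $p_1 = u_1(a)\wedge u_2(b)\ocmpl$, $q_1 = u_1(a)\ocmpl\wedge u_2(b)$, $r = u_1(a)\wedge u_2(b)$ actually belong to $L$ (not just to $\powerset\Gamma$) and are pairwise disjoint with $u_1(a) = p_1\vee r$, $u_2(b) = q_1\vee r$. This comes down to showing $L$ is closed under intersection with the specific cylinder sets $u_2(b)$ and their complements, which should follow from the generating description of $L$ together with the distributive law in $\powerset\Gamma$ — but it is the step where the special ``rectangular'' structure of the generators $[a\alpha, b\beta]$ is really used, and it is worth spelling out that an arbitrary element of $L$, written as a disjoint join of generators, meets $u_2(b)$ in another such disjoint join. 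A secondary point to be careful about is the well-definedness of $u_1$ on all of $L_{\mathcal U}$ (not just on atoms $[a\alpha]$): since $L_{\mathcal U}$ is the $0$-$1$-pasting of the $\powerset{\mathcal U_a}$, an element is either inside one block $\powerset{\mathcal U_a}$ or is $0$ or $\bbone$, so it suffices to define $u_1$ block-by-block and check the block maps agree on $0$ and $\bbone$, which they visibly do.
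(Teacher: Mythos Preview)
Your proposal is correct and follows essentially the same approach as the paper: both define the embeddings $u_i$ as cylinder-set maps into the concrete logic $L\subset\powerset\Gamma$ and verify the orthodistributive-product axioms set-theoretically, with generation coming from $[a\alpha,b\beta]=u_1([a\alpha])\wedge u_2([b\beta])$. You are in fact more thorough than the paper---it glosses over property (iii) entirely and defers the compatibility check (iv) to an external reference, whereas you spell out the witnessing decomposition $p_1,q_1,r$ and correctly identify that the only subtle point is checking these meets lie in $L$ rather than merely in $\powerset\Gamma$.
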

\begin{proof}
  Since $L_{\mathcal U}$ is a 0-1 pasting of Boolean algebras
  $\powerset{\mathcal U_i}$, any nonzero element of $L_{\mathcal U}$ is
  of the form:
  \begin{equation*}
    [a \in A] = \bigoplus_{\alpha\in A} [a \alpha]
  \end{equation*}
  where $a=1,\dots,N$ and $A\subset \mathcal U_a$.
  The same is true for $L_{\mathcal V}$.
  Maps
  \begin{align*}
    u\colon L_{\mathcal U} \to L,
    &\qquad u([a \in A]) =
      \set{(x, y)\subset \Gamma \
      \setdef x\in A} \equiv [a \in A, \bbone],\\
    v\colon L_{\mathcal V} \to L,
    &\qquad v([b \in B]) =
      \set{(x, y)\subset \Gamma
      \setdef y\in B} \equiv [\bbone, b \in B]
  \end{align*}
  are clearly injective mappings from $L_{\mathcal U}$ and $L_{\mathcal
  V}$ to $L$. Since any atom $[a \alpha, b \beta]$ of $L$ equals to $[a
  \alpha, \bbone] \wedge [\bbone, b \beta]$, union of images
  $u(L_{\mathcal U})\cup v(L_{\mathcal V})$ generates $L$. Clearly
  $u([a\in A])\wedge v([b \in B])$ exists and in
  Ref.~\onlinecite{tylec2015-2} we have shown that $u([a \in A])\compat
  v([b \in B])$.
\end{proof}

\begin{corollary}
  Thm.~\ref{thm:free-ortho-trivial} is no longer valid in the category
  of regular quantum logics.
\end{corollary}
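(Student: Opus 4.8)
The plan is to exhibit a single explicit counterexample: a triple $(L, L_{\mathcal U}, L_{\mathcal V})$ of regular quantum logics for which $(L, u, v)$ is a free orthodistributive product whose atoms are generated in the sense of \eqref{eq:atoms-gen}, yet in which neither component is a Boolean algebra. All the machinery is already in place from the theorem just proved; what remains is to pick the box world so that the components are non-Boolean and to check that every hypothesis of Thm.~\ref{thm:free-ortho-trivial}, other than the lattice property, survives.

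First I would fix the Popescu-Rohrlich configuration $N = M = 2$ with $\mathcal U_1 = \mathcal U_2 = \mathcal V_1 = \mathcal V_2 = \{0,1\}$. Then $L_{\mathcal U}$ is the $0$-$1$-pasting of two copies of the four-element Boolean algebra $\powerset{\{0,1\}}$, with atoms $[1\,0],[1\,1],[2\,0],[2\,1]$ and the only nontrivial orthogonalities $[1\,0]\perp[1\,1]$ and $[2\,0]\perp[2\,1]$; it is a lattice but it is not distributive --- for instance $[1\,0]\vee\bigl([2\,0]\wedge[1\,1]\bigr) = [1\,0]$ whereas $\bigl([1\,0]\vee[2\,0]\bigr)\wedge\bigl([1\,0]\vee[1\,1]\bigr) = \bbone$ --- hence not Boolean; the same applies to $L_{\mathcal V}$. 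Since the box world is finite, $L_{\mathcal U}$, $L_{\mathcal V}$ and $L$ are automatically complete, $\sigma$-complete and atomistic, and I would record, citing Ref.~\onlinecite{tylec2015-2,tylec2015}, that they are regular quantum logics and that the atoms of $L$ are exactly the $[a\alpha, b\beta]$.

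Next I would invoke the preceding theorem: $(L, u, v)$ is an orthodistributive product of $L_{\mathcal U}$ and $L_{\mathcal V}$, and its proof already observes that each atom $[a\alpha, b\beta]$ of $L$ equals $u([a\alpha])\wedge v([b\beta])$, while conversely every such meet of an atom of $L_{\mathcal U}$ with an atom of $L_{\mathcal V}$ is an atom of $L$; thus the atom set of $L$ is precisely $\set{u(a)\wedge v(b)\setdef a\text{ atom of }L_{\mathcal U},\ b\text{ atom of }L_{\mathcal V}}$, which is condition \eqref{eq:atoms-gen}. If the conclusion of Thm.~\ref{thm:free-ortho-trivial} remained valid with ``complete atomistic $\sigma$-orthomodular lattice'' weakened to ``regular quantum logic'', then applying it to this triple would force one of $L_{\mathcal U}$, $L_{\mathcal V}$ to be Boolean, contradicting the previous paragraph. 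Hence the theorem is no longer valid in the category of regular quantum logics.

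I expect the only delicate point to be bookkeeping rather than mathematics: one must verify that every hypothesis of Thm.~\ref{thm:free-ortho-trivial} apart from the lattice assumption genuinely holds for the chosen triple --- completeness and $\sigma$-completeness (automatic by finiteness), atomisticity, regularity, and the atom-generation identity --- so that the contradiction is really with the lattice-free restatement and not with some other dropped hypothesis; all of these are either immediate in the finite setting or already established in Ref.~\onlinecite{tylec2015-2,tylec2015} and in the proof of the theorem above. It is worth adding a remark that $L$ is \emph{not} a lattice (otherwise Pulmannov\'a's theorem would apply to $L$ directly and contradict the theorem we just proved), which is precisely why the example is consistent and why it witnesses that the lattice assumption in Thm.~\ref{thm:free-ortho-trivial} cannot be removed.
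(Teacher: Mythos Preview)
Your proposal is correct and follows the same line as the paper: the corollary is stated there without a separate proof, as an immediate consequence of the preceding theorem (the box-world logic $L$ is a free orthodistributive product of two non-Boolean single-box logics satisfying \eqref{eq:atoms-gen}), and you have simply made explicit the choice of a concrete $(\mathcal U,\mathcal V)$ with $N=M=2$ and two-valued outputs so that $L_{\mathcal U},L_{\mathcal V}$ are visibly non-Boolean, together with the verification that the remaining hypotheses (atomisticity, completeness, regularity, atom generation) hold while the lattice hypothesis fails for $L$.
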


Actually, we can show even more:

\begin{theorem}
    The logic $L$ of $(\mathcal U, \mathcal V)$-box world is a strong
    tensor product of single box logics $L_1, L_2$.
    \label{thm:tensor-prod}
\end{theorem}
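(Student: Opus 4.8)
The plan is to verify directly the three defining conditions of a strong tensor product for the triple $(L,\alpha,\beta)$, where $L$ is the box-world logic already constructed, $\alpha\colon L_{\mathcal U}\times L_{\mathcal V}\to L$ is given by $\alpha([a\in A],[b\in B]) = u([a\in A])\wedge v([b\in B]) = [a\in A, b\in B]$, and $\beta\colon \mathcal S(L_{\mathcal U})\times\mathcal S(L_{\mathcal V})\to\mathcal S(L)$ is the map sending a pair of states to the unique state on $L$ whose value on $[a\alpha,b\beta]$ is the product of the marginals. Condition (iii), that $L$ is generated by $\alpha(L_{\mathcal U},L_{\mathcal V})$, is essentially the content of the previous theorem: every atom $[a\alpha,b\beta]$ of $L$ lies in the image of $\alpha$, and the atoms generate $L$ since $L$ is a sublogic of the atomic Boolean algebra $\powerset{\Gamma}$. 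Condition (i), $\beta(\mu,\nu)(\alpha(a,b)) = \mu(a)\nu(b)$, holds by the very definition of $\beta$ on generators; one must check that a product state is well-defined as a state on $L$, i.e.\ that setting $\beta(\mu,\nu)$ to be (the restriction to $L$ of) the product measure $\mu\otimes\nu$ on $\powerset{\Gamma} = \powerset{\Gamma_1}\otimes\powerset{\Gamma_2}$ indeed gives a $\sigma$-additive state, which is immediate because product measures are measures and $L\subset\powerset{\Gamma}$.

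The substantive step is condition (ii): the set $\beta(\mathcal S(L_{\mathcal U}),\mathcal S(L_{\mathcal V}))$ of product states must be rich for $L$, that is, if every product state assigning value $1$ to $p\in L$ also assigns value $1$ to $q\in L$, then $p\le q$. First I would reduce this to the two-valued case. The two-valued states on $L_{\mathcal U}$ are exactly the point measures $\delta_x$, $x\in\Gamma_1$ (since $L_{\mathcal U}$ is a $0$-$1$-pasting of power sets $\powerset{\mathcal U_a}$, a two-valued state is a choice of one outcome per input, hence a point of $\Gamma_1$), and similarly for $L_{\mathcal V}$. Their products are precisely the point measures $\delta_{(x,y)}$ on $\Gamma$. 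Now the key observation is that $L$, being generated inside $\powerset{\Gamma}$ by the sets $[a\alpha,b\beta]$, is a \emph{concrete} logic on $\Gamma$: its elements are genuine subsets of $\Gamma$, ordered by inclusion, with orthocomplement the set-complement. For a concrete logic, $p\le q$ as elements of $L$ means exactly $p\subseteq q$ as subsets of $\Gamma$, and $\delta_{(x,y)}(p) = 1$ iff $(x,y)\in p$. Hence the hypothesis "every $\delta_{(x,y)}$ with $\delta_{(x,y)}(p)=1$ also has $\delta_{(x,y)}(q)=1$" says precisely $p\subseteq q$, i.e.\ $p\le q$. Thus the point-mass product states already form a rich (two-valued) set of states for $L$, which gives (ii) a fortiori, and also re-derives via Thm.~\ref{thm:set-repr} that $L$ is set-representable.

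The main obstacle to watch is the bookkeeping in the reduction to two-valued states and the verification that the product states are genuinely states on $L$ and not merely finitely additive functionals; here one leans on the fact that all the outcome sets $\mathcal U_a,\mathcal V_b$ are finite, so $\Gamma$ is finite, every family of mutually disjoint elements is finite, and $\sigma$-additivity is automatic. A secondary point requiring care is the well-definedness of $\beta$ as landing in $\mathcal U = \mathcal S(L)$: one should note that distinct pairs $(\mu,\nu)\neq(\mu',\nu')$ may a priori give the same functional on $L$, but this does not matter, since the definition of strong tensor product only requires $\beta$ to be \emph{a} map with properties (i)--(iii), not an injection. With these points addressed, conditions (i), (ii), (iii) are all verified and $L$ is a strong tensor product of $L_1 = L_{\mathcal U}$ and $L_2 = L_{\mathcal V}$.
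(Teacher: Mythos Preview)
Your argument is correct, and for condition~(ii) you take a more direct route than the paper. The paper also uses that $L$ is a concrete logic on $\Gamma$ to obtain a rich set of two-valued states, but then works harder: it shows that an \emph{arbitrary} two-valued state $\chi$ on $L$ factors as a product, by proving a ``rectangle'' property (if $\chi([p,q])=\chi([r,s])=1$ then $\chi([p,s])=\chi([r,q])=1$) which forces the set of atoms on which $\chi$ equals $1$ to split as $\mathcal O_\chi^1\times\mathcal O_\chi^2$. You instead exhibit a specific rich family of product states directly---the Dirac masses $\delta_{(x,y)}=\beta(\delta_x,\delta_y)$---and observe that richness of point evaluations is tautological for a concrete logic. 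This is shorter and entirely sufficient for the theorem. What the paper's detour buys is the stronger statement that \emph{every} two-valued state on $L$ is a product of two-valued states on $L_{\mathcal U}$ and $L_{\mathcal V}$ (equivalently, that two-valued states on $L$ are exactly the point masses on $\Gamma$), a structural fact about $L$ that your argument does not need and does not prove.
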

\begin{proof}
    Let
    \begin{equation*}
        \Phi([a\in A], [b\in B]) = [a\in A, b\in B] =
        \bigoplus_{\alpha\in A, \beta\in B} [a \alpha, b \beta]
    \end{equation*}
    and $\Psi(\mu, \nu)([a \alpha, b \beta]) = \mu(a \alpha) \nu(b
    \beta)$ (and extend by orthogonal sums to all $L$). It is clear
    that
    \begin{equation*}
        \Psi(\mu, \nu)(\alpha([a\in A, b\in B])) = \mu([a\in A])\nu([b\in B]).
    \end{equation*}
    and $L$ is generated (by definition) by elements $\phi([a\in A],
    [b\in B])$. It remains to show that the set $\Psi(\mathcal
    S_{\mathcal U}, \mathcal S_{\mathcal V})$, where $\mathcal S_i$ is
    set of states on $L_i$, is rich for $L$.

    Firstly, observe that since $L$ is a concrete logic, it has a rich
    set of two-valued states. Any such state $\chi$ on $L$ can be
    characterized by the set of atoms $\mathcal O_\chi$ on which it
    obtains value $1$ (logic is atomistic, so the value of a state on
    atoms fully describe the state). Moreover, if $[p, q], [r, s] \in
    \mathcal O_\chi$, then also $[p, s], [r, q] \in \mathcal O_\chi$.
    Indeed, it follows from
    \begin{equation*}
        1 = \mu([p, q]) = \mu([p, q]) + \mu([p\ocmpl, q]) =
        \mu([\bbone, q]) = \mu([r, q]) + \mu([r\ocmpl, q]) = 1
    \end{equation*}
    and
    \begin{equation*}
        1 = \mu([r, s]) = \mu([r, s]) + \mu([r, s\ocmpl]) =
        \mu([r, \bbone]) = \mu([r, q]) + \mu([r, q\ocmpl]) = 1
    \end{equation*}
    that either $\mu([r, q)]) = 1$ or $\mu([r\ocmpl, s]) = 1 = \mu([r,
    s\ocmpl])$. But the latter cannot be true since $[r\ocmpl, s]\perp
    [r, s\ocmpl]$. The proof that $[p, s]\in \mathcal O_\chi$ is completely analogous.
    Consequently, we can select subsets $\mathcal O_\chi^1, \mathcal
    O_\chi^2$ of atoms in $L_1$ and $L_2$ respectively, such that:
    \begin{equation*}
        \mathcal O_\chi = \set{[p, q] \setdef
            p\in \mathcal O_\chi^1,
            q\in \mathcal O_\chi^2}
    \end{equation*}

    In the next step we define two functions $\mu, \nu$ on $L_1$ and
    $L_2$ respectively, by:
    \begin{align*}
        \mu(p) &= 1 \text{ if $p\in\mathcal O_\chi^1$, otherwise } 0,\\
        \nu(q) &= 1 \text{ if $q\in\mathcal O_\chi^2$, otherwise } 0,
    \end{align*}
    on atoms $p, q$ in $L_1, L_2$
    and extended by orthogonal sums to the whole $L_1, L_2$.
    Clearly $\chi([p, q]) = \mu(p)\nu(q)$
    for any $p\in \mathcal L_1$ and $q\in\mathcal L_2$.
    Moreover, since there can be no disjoint pair
    in $\mathcal O_\chi^1$ (otherwise, $\chi$ would not be a state),
    we infer that $\mu$ is a state on $L_1$.
    The same is true for $\nu$.
    Consequently, we have shown that any two-valued state on $L$
    is a product state of two two-valued states on single box logics.
    Thus the set of product states is rich in $L$.
\end{proof}

This fact has remarkable consequences. As it was mentioned in the
introduction, for lattices of projections on Hilbert spaces, strong
tensor product does not exist. It is a consequence of the fact that
pure entangled states on the composite quantum system are not mixtures
but superpositions of pure product states~\cite{pulmannova1983}. Thus
to describe properly a composed quantum mechanical systems we have to
use weaker notion of the weak tensor product.

One the other hand, the logic of two non-signalling boxes is a strong
tensor product of logics of single boxes, so, contrary to the quantum
mechanical states, the set of product states fully describes the
physical structure of a box system. It might suggest that what is
called ``entanglement'' or ``non-local'' property of certain states on
box-world system is in fact a weaker notion than the quantum mechanical
entanglement (despite the fact that it allows stronger violation of
Bell-type inequalities). Giving precise meaning to this statement is an
interesting topic of a further research.

Our example suggests also that no-signaling box models
are not the best tools to investigate the question
of what distinguishes quantum mechanics from other no-signaling theories%
\cite{Pawlowski:2009aa}.
Their super-quantum properties are the result of a rather trivial structure,
allowing for strong tensor product to exists.
By no means one can state that the no-signaling boxes are more general than
the quantum mechanics, even restricted to the finite dimensional Hilbert spaces.

\begin{acknowledgements}
  This work was done with the support of John Templeton Foundation grant.
\end{acknowledgements}

\bibliography{library}

\end{document}